\newtheorem{proposition}{Proposition}
\theoremstyle{definition}
\DeclareRobustCommand{\qed}{%
  \ifmmode 
  \else \leavevmode\unskip\penalty9999 \hbox{}\nobreak\hfill
  \fi
  \quad\hbox{\qedsymbol}}
\newcommand{\qedsymbol}{$\blacksquare$}
\newenvironment{proof}[1][\proofname]{\par
  \normalfont
  \topsep6\p@\@plus6\p@ \trivlist
  \item[\hskip\labelsep\itshape
    #1.]\ignorespaces
}{%
  \qed\endtrivlist
}
\newcommand{\proofname}{Proof}
\begin{document}
\begin{frontmatter}

\title{Analysis of the Unscented Transform Controller for Systems with Bounded Nonlinearities\thanksref{footnoteinfo}} 

\thanks[footnoteinfo]{This work was supported by Air Force Office of Scientific Research grant FA9550-23-1-0131 and NASA University Leadership Initiative grant 80NSSC22M0070. (Corresponding Author: Ram Padmanabhan. Email: \texttt{ramp3@illinois.edu}.)}

\author{Siddharth A. Dinkar$^1$,}
\author{Ram Padmanabhan$^1$,}
\author{Anna Clarke$^2$,}
\author{Per-Olof Gutman$^2$, and} 
\author{Melkior Ornik$^1$}

\address{$^1$University of Illinois Urbana-Champaign, Urbana, IL 61801, USA.}
\address{$^2$Technion -- Israel Institute of Technology, Haifa 32000, Israel.}

\begin{abstract}
In this paper, we present an analysis of the Unscented Transform Controller (UTC), a technique to control nonlinear systems motivated as a dual to the Unscented Kalman Filter (UKF). We consider linear, discrete-time systems augmented by a bounded nonlinear function of the state. For such systems, we review $1$-step and $N$-step versions of the UTC. Using a Lyapunov-based analysis, we prove that the states and inputs converge to a bounded ball around the origin, whose radius depends on the bound on the nonlinearity. Using examples of a fighter jet model and a quadcopter, we demonstrate that the UTC achieves satisfactory regulation and tracking performance on these nonlinear models.
\end{abstract}

\begin{keyword}
Control design, discrete-time systems, nonlinear control systems, unscented transform controller.
\end{keyword}

\end{frontmatter}

\section{Introduction}
Controlling nonlinear systems presents unique challenges in comparison to linear control, especially in ensuring stability, achieving robustness to uncertainties, and managing computational complexity \citep{Khalil}. There exist a number of methods to control such systems, including nonlinear model-predictive control \citep{NMPC}, feedback linearization \citep{Khalil}, and more complex neural network-based controllers \citep{NN1, NN2}. Each of these methods suffer from drawbacks that limit their use in practical nonlinear control. Feedback linearization requires the use of an invertible coordinate transformation that is not guaranteed to exist for general nonlinear systems, and cannot handle systems with actuation constraints. Model predictive control and neural network-based methods require high computational effort, limiting their applicability in real-world problems.

The focus of this paper is on the Unscented Transform Controller (UTC), proposed by \cite{Clarke} and \cite{CG23}. The motivation behind this technique is a duality between optimal state estimation and control. {Such a duality is well-explored in the context of linear systems, where the Riccati equations for the Linear Quadratic Gaussian (LQG) controller and the Kalman filter have similar structures \citep[Chapter 7]{Astrom}.} \cite{Todorov08} describes a more general version of such a duality where the Kalman filter is replaced by the information filter. When the dynamics are nonlinear, state estimation is often accomplished through variants of the Kalman filter, including the Extended Kalman Filter \citep{JU04} and Unscented Kalman Filter (UKF) \citep{UKF, JU04}. 

The UKF relies on the Unscented Transform (UT), a technique to apply a nonlinear transformation to a given probability distribution \citep{UT}. This transform uses a set of sampled \emph{sigma points} to propagate statistics of the original distribution through the nonlinear transformation, rather than attempting to transform the distribution directly. In the UKF, this transform is used to obtain statistics of the estimated state after propagating its sigma points through nonlinear dynamics. 

Motivated by the notion of duality between state estimation and control, \cite{Clarke} and \cite{CG23} proposed the Unscented Transform Controller (UTC) for controlling nonlinear systems. In contrast to using the Unscented Transform to obtain a prediction of the state, the UTC uses this transform to `predict' a control input to achieve a desired trajectory \citep{Clarke}. Sigma points are chosen around the current control prediction rather than state estimate, and propagated through the nonlinear dynamics to obtain measurements corresponding to each sigma point of the control. These measurements are used to design an algorithm in a dual fashion to the UKF. \cite{Clarke} applied the UTC for controlling complex maneuvers during free fall of a vertical skydiver, and the dynamics of this system are nonlinear in nature. However, despite a discussion on stability for linear dynamics by \cite{Clarke}, there is no theoretical analysis of stability of the UTC under nonlinear dynamics. 

In this paper, we present an analysis of the UTC for discrete-time, time-invariant systems whose dynamics are described by a linear system augmented by nonlinear terms that are bounded in norm. We note that such systems are common in practice, and phenomena such as natural damping, passivity or geometric design may ensure that the linear portion of the dynamics are asymptotically stable \citep{Damping, Khalil}. {Our control objectives include both output regulation and trajectory tracking problems.} The analysis of stability uses a Lyapunov argument, and for the problem of output regulation, we show that the vector of states and inputs converges asymptotically to a bounded ball of a given radius depending on the bound on nonlinearities. We use two illustrative examples, on models of a fighter jet and a quadcopter, to show that the UTC achieves satisfactory performance in both output regulation and tracking.

The remainder of this paper is organized as follows. In Section \ref{sec:UTC}, we discuss the Unscented Transform Controller as introduced in previous work, providing the algorithm for its $1$-step and general $N$-step prediction variants. In Section \ref{sec:Stability}, we present a stability analysis of the $N$-step UTC algorithm, showing that the system states and inputs converge asymptotically to a bounded ball around the origin. We present two examples on a linear fighter-jet model and a nonlinear quadcopter model in Section \ref{sec:Examples}, illustrating the use of the UTC in these scenarios.

\section{Unscented Transform Controller} \label{sec:UTC}
Consider a time-invariant discrete-time control system of the form 
\begin{equation} \label{eq:System}
    x_{k+1} = \mathbf{A}x_{k} + \mathbf{B}u_{k} + f(x_{k}), ~~~y_k = \mathbf{C}x_k
\end{equation}
where $x_k \in \mathbb{R}^n$ is the state vector, $u_k \in \mathbb{R}^m$ is the control input, $y_k \in \mathbb{R}^p$ is the measurement vector and $f:\mathbb{R}^n \to \mathbb{R}^n$ is a bounded nonlinear function of the state $x_k$ satisfying
\begin{equation}
    \left\|f(x)\right\|_2 \leq \bar{f} \text{ for all $x \in \mathbb{R}^n$},
\end{equation}
for some $\bar{f} > 0$. {We assume the dynamics, including the nonlinear function $f$, are fully known.} The form of system \eqref{eq:System} is motivated by a linear system augmented with a bounded nonlinearity. {We consider the control objective of ensuring that the output $y_k$ tracks a reference trajectory $r_k$, with output regulation as a special case when $r_k = 0$ for all $k$.} We now provide details of the Unscented Transform Controller (UTC), proposed by \cite{Clarke} and \cite{CG23} to achieve this objective.

\subsection{1-Step UTC} \label{sec:1Step}
The UTC algorithm can be summarized into two steps: the \emph{propagation} step and the \emph{update} step. Based on the current state $x_k$ and input $u_k$, the \emph{propagation} step generates a control $u_{k+1}^{-}$. {This control may be based on a designer's prior knowledge which is used to specify a nonlinear feedback law
\begin{equation}
    u_{k+1}^- = h(x_k, u_k)
\end{equation}
for some function $h : \mathbb{R}^n \times \mathbb{R}^m \xrightarrow{} \mathbb{R}^m$. If no such prior knowledge is assumed, we suppose that the control is of the form 
\begin{equation}
    u_{k+1}^- = u_k + w_k; \quad w_k \sim \mathcal{N}(0, \mathbf{Q}_{u})
\end{equation}
for some initial controller covariance $\mathbf{Q}_{u}$.} The propagation step of the UTC then generates $2m + 1$ sigma points $U^i_k$ and their associated weights $W^i$, where $m$ denotes the dimension of $u_k \in \mathbb{R}^m$.
\begin{align}
    &U^0_k = u_k^-, \quad U^i_k = u_k^- \pm \sqrt{\frac{n}{1 - W^0}}S_j, \label{eq:u_sigma} \\
    &W^0 \in (0, 1), \quad W^i = \frac{1 - W^0}{2m}, \label{eq:w_sigma} \\ 
    \text{for } &i = 1, \ldots, 2m, \quad j = 1,\ldots, m. \nonumber
\end{align}
{Here, $S_j$ is the $j$th column of the square root of the matrix $\mathbf{P_{k|k}}$, which denotes the covariance matrix of the vector of control inputs $u_k$. In particular, at the beginning of each propagation step, we set $\mathbf{Q}_u = \mathbf{P_{0|0}}$. This covariance matrix is positive semi-definite, and hence its square root is unique and positive semi-definite \citep{Horn}.} Further, $W^0$ is a tuning parameter chosen by the designer. Each of the $2m+1$ sigma points are individually propagated through the dynamics \eqref{eq:System} to generate $2m+1$ outputs $Y^i$ as follows:
\begin{equation}
    Y^i = \mathbf{C}(\mathbf{A}x_{k} + \mathbf{B}U^i_k + f(x_{k})), ~~i = 1, \ldots, 2m,
\end{equation}
and $Y^0$ is generated using $U^{0}_{k}$ in \eqref{eq:u_sigma}. The time index $k$ is omitted in these outputs for simplicity of notation. As described by \cite{Clarke} and \cite{CG23}, define the \emph{predicted output} as the weighted sum of these points:
\begin{equation} \label{eq:ypred}
    y_{pred} = \sum_{i=0}^{2m}W^iY^i.
\end{equation}

The next step of the UTC is the \emph{update} step. We define the weighted mean of the sigma points by 
\begin{equation}
    u_{k+1|k} = \sum_0^{2m}W^iU^i.
\end{equation}
We then compute the predicted input covariance $\mathbf{P}_{k+1|k}$, the output covariance $\mathbf{P}_y$, and the input-output cross-covariance $\mathbf{P}_{uy}$ as follows:
\begin{subequations} \label{eq:P_update}
\begin{align}
    \mathbf{P}_{k+1|k} &= \mathbf{Q}_u + \sum_{i=0}^{2m}W^i  (U^i - u_{k+1|k})(U^i - u_{k+1|k})^T, \\
    \mathbf{P}_y &= \mathbf{P}_{err} + \sum_{i=0}^{2m}W^i (Y^i - y_{pred})(Y^i - y_{pred})^T, \\
    \mathbf{P}_{uy} &= \sum_{i=0}^{2m}W^i (U^i - u_{k+1|k})(Y^i - y_{pred})^T.
\end{align}
\end{subequations}
These matrices are used to compute a gain $\mathbf{K}$ to generate the desired control at the next time step as well as update the input covariance $\mathbf{P}_{k+1|k+1}$ \citep{Clarke, CG23}:
\begin{subequations} \label{eq:K_update}
\begin{align}
    \mathbf{K} &= \mathbf{P}_{uy}\mathbf{P}_y^{-1}, \\
    u_{k+1} &= u_{k+1|k} + \mathbf{K}(r_k - y_{pred}), \\
    \mathbf{P}_{k+1|k+1} &= \mathbf{P}_{k+1|k} - \mathbf{K}\mathbf{P}_y\mathbf{K}^T.
\end{align}
\end{subequations}
{These quantities are used to generate the $2m+1$ sigma points at the next time step based on \eqref{eq:u_sigma} and \eqref{eq:w_sigma}. It is clear that the update equations \eqref{eq:K_update} have a similar structure to those of the Unscented Kalman Filter (UKF) \citep{UKF}. In particular, the reference trajectory $r_k$ in the UTC plays the role of the true measurements in the UKF. Subsequently, the tracking error $r_k-y_{pred}$ in \eqref{eq:K_update} mimics the estimation error in the UKF. The UTC thus generates a control that attempts to ensure reference tracking by minimizing the tracking error. We remark here that input constraints, if present, may be addressed by imposing them on the sigma points at each prediction step \citep{Clarke, CG23}.}

\subsection{N-Step UTC} \label{sec:NStep}
{The UTC described above propagates the control through the dynamics over only $1$ time step. We now describe the general case where the control is propagated over $N$ steps, labeled the $N$-step UTC. In this formulation, each of the $2m+1$ input sigma point $U^i_k$ drives the model forward $N$ times:}
\begin{align}
    x^i_{k+1} &= \mathbf{A}x_{k} + \mathbf{B}U^i_k + f(x_{k}), \nonumber \\
    x^i_{k+2} &= \mathbf{A}x_{k+1} + \mathbf{B}U^i_{k} + f(x_{k+1}), \nonumber \\
    &\vdots \nonumber \\
    x^i_{k+N} &= \mathbf{A}x_{k+N-1} + \mathbf{B}U^i_{k} + f(x_{k+N-1}), ~~i = 0, \ldots, 2m, \label{eq:recur}
\end{align}
where $U^{i}_{k}$ are defined in \eqref{eq:u_sigma}. Due to the recursive relation in \eqref{eq:recur}, we can unroll $x^i_{k+N}$ in terms of $x_k$ and $U^i_k$. First, we define the matrices $\mathbf{F} = \mathbf{A}^N$, $\mathbf{G} = (\mathbf{A}^{N-1} + \mathbf{A}^{N-2} + ~\cdots ~+ \mathbf{A} + \mathbf{I}_n)$, and $g(x_k, N) = \sum_{i=0}^{N-1}\mathbf{A}^{N-i-1}f(x_{k+i})$ to simplify the $N$-step propagation:
\begin{equation} \label{eq:System_Nstep}
    x^i_{k+N} = \mathbf{F}x_k + \mathbf{GB}U_{k}^{i} + g(x_k, N).
\end{equation}
{Propagating each sigma point $U_{k}^{i}$ through the dynamics, we generate $2m+1$ outputs $Y^i = \mathbf{C}x^i_{k+N}$. The algorithm then follows the steps starting from \eqref{eq:ypred}, generating the control at the next instant from \eqref{eq:K_update}. We note that the above algorithm generalizes the $1$-step case. In particular, by setting $N = 1$, the terms $\mathbf{F}$, $\mathbf{G}$, and $g(x_k,N)$ simply reduce to $\mathbf{A}$, $\mathbf{I}_n$, $f(x_k)$ respectively, while the control update reduces to the standard $1$‐step update.}

{In the following section, we consider the case when $r_k = 0$ for all $k$, i.e., the problem of output regulation. In this setting, we analyze the stability of the closed-loop system in terms of the norm of states and inputs.}

\section{Stability Analysis} \label{sec:Stability}
{In this section, we consider the output regulation problem and present a Lyapunov-based analysis of the $N$-step UTC. \cite{Clarke} provides a compact description for the $N$-step UTC applied to a linear system. Using \eqref{eq:System_Nstep}, we can similarly write the following description of the system dynamics and controller:}
\begin{align} \label{eq:closed-loop}
\begin{split}
    x_{k+1} &= \mathbf{A}x_k + \mathbf{B}u_k + f(x_k), \\
    y_{k} &= \mathbf{C}x_k, \\
    u_k &= u_{k-1} - \mathbf{K}\mathbf{C}(\mathbf{F}x_k + \mathbf{GB}u_{k-1} + g(x_k, N)).
\end{split}
\end{align}
Let $v_{k+1} = u_k$, so that we can write the following augmented dynamics:
\begin{align*}
    \begin{bmatrix}
    x_{k+1} \\
    v_{k+1}
  \end{bmatrix} &= 
  \begin{bmatrix}
    (\mathbf{A} - \mathbf{B}\mathbf{K}\mathbf{C}\mathbf{F}) & (\mathbf{I}-\mathbf{B}\mathbf{K}\mathbf{C}\mathbf{G})\mathbf{B}\\
    -\mathbf{K}\mathbf{C}\mathbf{F} & \mathbf{I}-\mathbf{K}\mathbf{C}\mathbf{G}\mathbf{B}
  \end{bmatrix}
\begin{bmatrix}
    x_{k} \\
    v_{k}
  \end{bmatrix} \nonumber \\ 
  &+ 
  \begin{bmatrix}
    f(x_k)-\mathbf{B}\mathbf{K}\mathbf{C}g(x_k, N) \\
    -\mathbf{K}\mathbf{C}g(x_k, N)
  \end{bmatrix}.
\end{align*}
To simplify notation further, let
\begin{gather*}
a_{k} = \begin{bmatrix}
    x_{k} \\
    v_{k}
  \end{bmatrix}; \quad 
  \mathbf{Z} = \begin{bmatrix}
    (\mathbf{A} - \mathbf{B}\mathbf{K}\mathbf{C}\mathbf{F}) & (\mathbf{I}-\mathbf{B}\mathbf{K}\mathbf{C}\mathbf{G})\mathbf{B}\\
    -\mathbf{K}\mathbf{C}\mathbf{F} & \mathbf{I}-\mathbf{K}\mathbf{C}\mathbf{G}\mathbf{B}
  \end{bmatrix}; \\ 
  D_k = \begin{bmatrix}
    f(x_k)-\mathbf{B}\mathbf{K}\mathbf{C}g(x_k, N) \\
    -\mathbf{K}\mathbf{C}g(x_k, N)
  \end{bmatrix},
\end{gather*}
so that the system can be compactly written as
\begin{equation} \label{eq:lyap_system}
    a_{k+1} = \mathbf{Z}a_k + D_k,
\end{equation}
where $D_k$ captures the nonlinearity of the system. We note that the gain $\mathbf{K}$ can then be chosen to ensure that the linear system $a_{k+1} = \mathbf{Z}a_k$ is asymptotically stable. Thus, there exists a positive definite matrix $\mathbf{P}$ such that $\mathbf{Z}^T\mathbf{P}\mathbf{Z} - \mathbf{P} = -\mathbf{I}$ \citep[Chapter 3]{Khalil}. We now use \eqref{eq:lyap_system} to analyze the stability of the closed-loop description \eqref{eq:closed-loop} in the following proposition.

\begin{proposition}
Under the system dynamics and controller in \eqref{eq:closed-loop} which describe the $N$-step UTC applied to a linear system affected by a nonlinearity with norm bound $\bar{f}$, the vector of states and inputs $a_k$ with the dynamics \eqref{eq:lyap_system} converges to a bounded ball of radius $R$, where
\begin{align}
    R &= \Bar{D}\left[\|\mathbf{Z}\|p_{max} + \sqrt{{\|\mathbf{Z}\|}^2{p_{max}}^2 + p_{max}}\right], \\
    \Bar{D} &= \sqrt{(\Bar{f} + \|\mathbf{B}\mathbf{K}\mathbf{C}\|_2 \Bar{g})^2 + (\|\mathbf{K}\mathbf{C}\|_2 \Bar{g})^2}, \nonumber \\
    \Bar{g} &= \Bar{f}\sum_{i=0}^{N-1}\|\mathbf{A}^{N-i-1}\|_2, \nonumber
\end{align}
and $p_{max}$ is the maximum eigenvalue of the matrix $\mathbf{P}$ satisfying $\mathbf{Z}^T\mathbf{P}\mathbf{Z} - \mathbf{P} = -\mathbf{I}$.
\end{proposition}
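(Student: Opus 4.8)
The plan is to adopt the quadratic Lyapunov function $V(a_k) = a_k^T\mathbf{P}a_k$, where $\mathbf{P} \succ 0$ solves the discrete Lyapunov equation $\mathbf{Z}^T\mathbf{P}\mathbf{Z} - \mathbf{P} = -\mathbf{I}$, and to show that its one-step increment is strictly negative outside the ball of radius $R$; this yields the asserted ultimate boundedness.

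First I would bound the forcing term $D_k$. From $\|f(x)\|_2 \le \Bar{f}$ and the triangle inequality applied to $g(x_k,N)=\sum_{i=0}^{N-1}\mathbf{A}^{N-i-1}f(x_{k+i})$, I get $\|g(x_k,N)\|_2 \le \Bar{f}\sum_{i=0}^{N-1}\|\mathbf{A}^{N-i-1}\|_2 = \Bar{g}$. Bounding the two stacked blocks of $D_k$ separately gives $\|f(x_k)-\mathbf{BKC}g(x_k,N)\|_2 \le \Bar{f}+\|\mathbf{BKC}\|_2\Bar{g}$ and $\|\mathbf{KC}g(x_k,N)\|_2 \le \|\mathbf{KC}\|_2\Bar{g}$; since the squared Euclidean norm of a stacked vector is the sum of the blocks' squared norms, this produces $\|D_k\|_2 \le \Bar{D}$ exactly as defined in the statement.

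Next I would expand the increment. Writing $V(a_{k+1})=(\mathbf{Z}a_k+D_k)^T\mathbf{P}(\mathbf{Z}a_k+D_k)$ and substituting the Lyapunov identity $\mathbf{Z}^T\mathbf{P}\mathbf{Z}=\mathbf{P}-\mathbf{I}$ yields
\[
V(a_{k+1})-V(a_k) = -\|a_k\|_2^2 + 2a_k^T\mathbf{Z}^T\mathbf{P}D_k + D_k^T\mathbf{P}D_k.
\]
Since $\mathbf{P}$ is symmetric positive definite with $\|\mathbf{P}\|_2=p_{max}$, Cauchy--Schwarz together with submultiplicativity of the spectral norm bounds the cross term by $2\|\mathbf{Z}\|p_{max}\Bar{D}\,\|a_k\|_2$ and the quadratic term by $p_{max}\Bar{D}^2$, giving
\[
V(a_{k+1})-V(a_k) \le -\|a_k\|_2^2 + 2\|\mathbf{Z}\|p_{max}\Bar{D}\,\|a_k\|_2 + p_{max}\Bar{D}^2.
\]
Treating the right-hand side as a quadratic in $\|a_k\|_2$, its larger root is precisely $R$, so the increment is strictly negative whenever $\|a_k\|_2 > R$.

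The step I expect to demand the most care is turning this strict decrease into the asserted convergence. A negative increment of $V$ outside the ball $\{\|a\|_2 \le R\}$ does not by itself confine $\|a_k\|_2$ to $R$, because the sublevel sets of $V$ are ellipsoids rather than Euclidean balls. The clean route is standard ultimate-boundedness reasoning: pass to the smallest sublevel set $\{a : V(a) \le c\}$ containing the ball of radius $R$, observe that it is forward invariant and is entered in finite time, and read off its Euclidean radius as the ultimate bound. That bound carries a conditioning factor $\sqrt{p_{max}/p_{min}}$, so to recover $R$ exactly one must either interpret $R$ as the threshold radius at which dissipation sets in, or argue more sharply that no net outward excursion of $\|a_k\|_2$ beyond $R$ can persist. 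Handling this gap cleanly is the main obstacle, while every other step is a routine norm estimate.
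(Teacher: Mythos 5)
Your proposal follows essentially the same route as the paper's proof: the same Lyapunov function $V(a_k)=a_k^T\mathbf{P}a_k$ with $\mathbf{Z}^T\mathbf{P}\mathbf{Z}-\mathbf{P}=-\mathbf{I}$, the same Rayleigh and Cauchy--Schwarz bounds on the quadratic and cross terms, the same blockwise bound $\|D_k\|_2\le\Bar{D}$, and the same identification of $R$ as the larger root of the resulting quadratic in $\|a_k\|_2$. The only place you go beyond the paper is the final step you flag as ``the main obstacle'': the paper itself passes directly from ``$\Delta V_k<0$ whenever $\|a_k\|_2>R$'' to ``$a_k$ converges to the ball of radius $R$,'' with no sublevel-set argument and no condition-number correction, so the subtlety you identify is a genuine looseness in the published proof rather than a defect of your attempt. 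If one carries out the standard ultimate-boundedness argument you sketch, the Euclidean ultimate bound picks up the factor $\sqrt{p_{max}/p_{min}}$ from the ellipsoidal level sets of $V$, and the paper's conclusion is then best read in the sense you propose: $R$ is the threshold radius outside which $V$ strictly decreases, which is consistent with the paper's own admission that the analysis is conservative.
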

\begin{proof}
Define a Lyapunov function
\begin{equation}
    V_k := {a_k}^T\mathbf{P}a_k
\end{equation}
and $\Delta V_k = V_{k+1} - V_k$. Our aim is to investigate conditions under which $\Delta V_k <  0$, to ensure asymptotic stability is achieved \citep{Khalil}. Substituting into the equation and simplifying according to our assumption of the linear system's stability, we get
\begin{align}
    \Delta V_k &= [\mathbf{Z}a_k + D_k]^T\mathbf{P}[\mathbf{Z}a_k + D_k] - {a_k}^T\mathbf{P}a_k \nonumber \\
    &= {a_k}^T(\mathbf{Z}^T\mathbf{P}\mathbf{Z} - \mathbf{P})a_k + 2{a_k}^T\mathbf{Z}^T\mathbf{P}D_k + {D_k}^T\mathbf{P}D_k \nonumber \\
    &= -\|a_k\|_2^2 + 2{a_k}^T\mathbf{Z}^T\mathbf{P}D_k + {D_k}^T\mathbf{P}D_k. \label{eq:DeltaVk1}
\end{align}
Since $\mathbf{P}$ is positive definite, the Rayleigh inequality $p_{min}\|D_k\|_2^2 \leq {D_k}^T\mathbf{P}D_k \leq p_{max}\|D_k\|_2^2$ holds, where $p_{min}, p_{max}$ are the minimum and maximum eigenvalues of $\mathbf{P}$ respectively \citep{Horn}. We use the upper bound $p_{max}\|D_k\|_2^2$ for our purposes. By the Cauchy-Schwartz inequality, we can simplify
\begin{align}
    {a_k}^T\mathbf{Z}^T\mathbf{P}D_k &= [\mathbf{Z} a_k]^T\mathbf{P}D_k \leq \|\mathbf{Z}a_k\|_2 \|\mathbf{P}D_k\|_2 \nonumber \\
    &\leq \|\mathbf{Z}\| \|\mathbf{P}\| \|D_k\|_2 \|a_k\|_2 \nonumber \\
    &\leq p_{max} \|\mathbf{Z}\| \|D_k\|_2 \|a_k\|_2. \label{eq:ak1}
\end{align}
Substituting \eqref{eq:ak1} and the preceding inequality in \eqref{eq:DeltaVk1},
\begin{equation} \label{eq:DeltaVk2}
    \Delta V_k \leq \\-\|a_k\|_2^2 + 2\|\mathbf{Z}\| p_{max} \|D_k\|_2 \|a_k\|_2 + p_{max}\|D_k\|_2^2.
\end{equation}
Asymptotic stability holds when $\Delta V_k < 0$, and using \eqref{eq:DeltaVk2}, this condition is satisfied when
\begin{equation}
\begin{split}
    \|a_k\|_2^2
    - 2\|\mathbf{Z}\| p_{max} \|D_k\|_2 \|a_k\|_2
    - p_{max}\|D_k\|_2^2 > 0
\end{split}
\end{equation}
Note that this inequality is quadratic in $\|a_k\|_2$. Solving for $\|a_k\|_2$, the condition $\Delta V_k < 0$ holds if 
\begin{equation} \label{eq:IBOUND}
    \|a_k\|_2 > \|D_k\|_2\left[\|\mathbf{Z}\|p_{max} + \sqrt{{\|\mathbf{Z}\|}^2{p_{max}}^2 + p_{max}}\right].
\end{equation}
However, $\|D_k\|_2$ varies with the timestep $k$. Recall that $D_k = \begin{bmatrix}
    f(x_k)-\mathbf{B}\mathbf{K}\mathbf{C}g(x_k, N) \\
    -\mathbf{K}\mathbf{C}g(x_k, N)
  \end{bmatrix}$ and $\|f(x_k)\|_2 \leq \Bar{f}$.
Since $g(x_k, N) = \sum_{i=0}^{N-1}\mathbf{A}^{N-i-1}f(x_{k+i})$, by the triangle inequality,
\begin{align}
\begin{split}
    \|g(x_k, N)\|_2 &= \left\|\sum_{i=0}^{N-1}\mathbf{A}^{N-i-1}f(x_{k+i}) \right\|_2 \\
    &\leq \sum_{i=0}^{N-1}\left\|\mathbf{A}^{N-i-1}f(x_{k+i})\right\|_2 \\
    &\leq \Bar{f}\sum_{i=0}^{N-1}\|\mathbf{A}^{N-i-1}\|_2 
    =\vcentcolon \Bar{g}.
\end{split}
\end{align}
To derive a tight bound on the nonlinearity $D_k$ in terms of the bounds $\Bar{f}$ and $\Bar{g}$, note that
\begin{equation}
    \|D_k\|_2 = \sqrt{\|f(x_k)-\mathbf{B}\mathbf{K}\mathbf{C}g(x_k, N)\|_2^2 + \|\mathbf{K}\mathbf{C}g(x_k, N)\|_2^2}.
\end{equation}
We can apply the triangle inequality inside each of the norms in $\|D_k\|_2$ and use the submultiplicative property of matrix norms to get the bounds
\begin{align*}
        \|f(x_k)-\mathbf{B}\mathbf{K}\mathbf{C}g(x_k, N)\|_2 &\leq \Bar{f} + \|\mathbf{B}\mathbf{K}\mathbf{C}\|_2 \Bar{g}, \\
        \|\mathbf{K}\mathbf{C}g(x_k, N)\|_2 &\leq \|\mathbf{K}\mathbf{C}\|_2 \Bar{g}.
\end{align*}
It follows that 
  \begin{equation}
  \begin{split}
      \|D_k\|_2 \leq \sqrt{(\Bar{f} + \|\mathbf{B}\mathbf{K}\mathbf{C}\|_2 \Bar{g})^2 + (\|\mathbf{K}\mathbf{C}\|_2 \Bar{g})^2} 
      =\vcentcolon \Bar{D}.
    \end{split}
  \end{equation}
Thus, if
\begin{equation}
  \|a_k\|_2 > \Bar{D}\left[\|\mathbf{Z}\|p_{max} + \sqrt{{\|\mathbf{Z}\|}^2{p_{max}}^2 + p_{max}}\right],
\end{equation}
then \eqref{eq:IBOUND} is satisfied and $\Delta V_k < 0$. Thus, the state $a_k$ converges asymptotically to a bounded ball with radius $R$, where 
$$
    R = \Bar{D}\left[\|\mathbf{Z}\|p_{max} + \sqrt{{\|\mathbf{Z}\|}^2{p_{max}}^2 + p_{max}}\right],
$$
thus concluding the proof.
\end{proof}
We note that the analysis above is potentially conservative due to the use of different inequalities. Improving this is a useful avenue for future work. We also emphasize that this stability guarantee depends on the number of prediction steps $N$, as the bound $\Bar{D}$ and radius $R$ clearly depend on $N$.

\section{Simulation Examples} \label{sec:Examples}
In this section, we illustrate the efficacy and flexibility of the UTC on two representative systems. Section \ref{sec:ADMIRE} focuses on attitude control of the ADMIRE fighter jet model \citep{SDRA}. We augment the model with various known nonlinear functions $f(x_k)$, which represent possible environmental disturbances. This model is thus a useful test case for the UTC based on the system formulation \eqref{eq:System}. Section \ref{sec:Quad} then investigates the performance of the UTC on a nonlinear quadcopter system, which serves as a canonical example in robotics and unmanned aerial vehicles research. In both settings, we demonstrate how the UTC’s design achieves satisfactory stabilization and tracking performance.

\begin{figure}[!t]
\centering
\includegraphics[width = 0.45\textwidth]{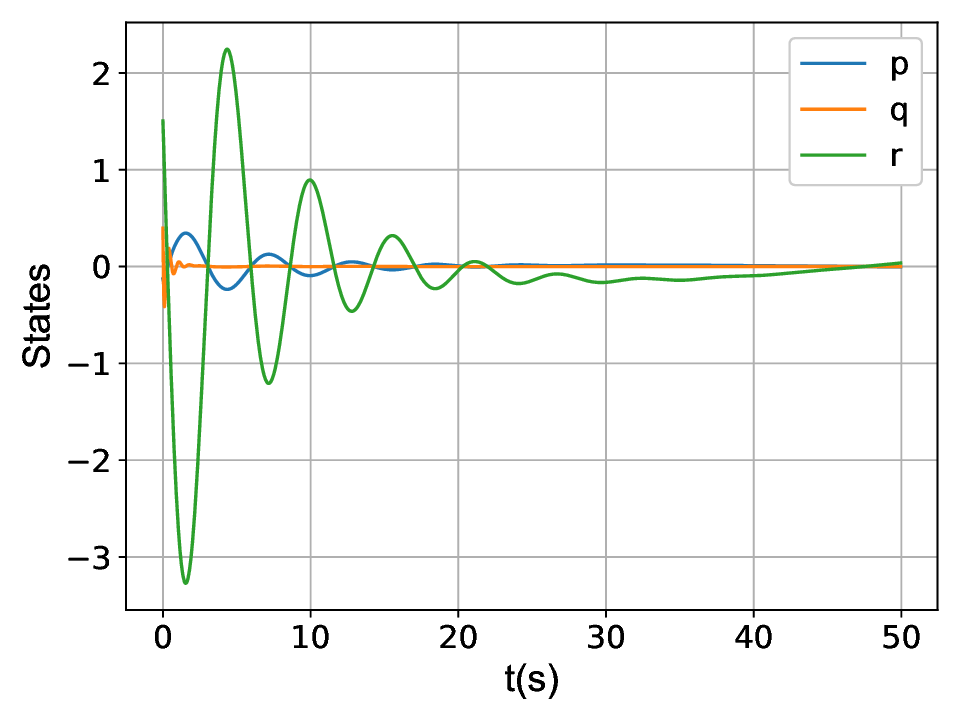}
\caption{Performance of the $1$-step UTC in stabilizing states of the ADMIRE model to the origin.}
\label{figure:ADMIRE_reg}
\end{figure}

\subsection{Example 1: ADMIRE Fighter Jet Model} \label{sec:ADMIRE}
We use the dynamics of the ADMIRE fighter-jet model \citep{SDRA}, a popular application for control frameworks \citep{KWT10, BO22}. We focus on the attitude control subsystem of the model, which consists of states $p,q$ and $r$ denoting the roll, pitch and yaw rates in rad/s. The linearized dynamics of this subsystem were established by \cite{SDRA} as $\dot x = \mathbf{A}x(t) + \mathbf{B}u(t)$ where, 
\begin{align}
\begin{split} \label{eq:ADMIRE}
    x &= \begin{bmatrix}p \\ q \\ r\end{bmatrix}; \quad \mathbf{A} = \begin{bmatrix}
    -0.9967  & 0 & 0.6176 \\ 
    0 &-0.5057 &0 \\
    -0.0939 &0 &-0.2127
    \end{bmatrix}; \\
    \mathbf{B} &= \begin{bmatrix}
    0 & -4.2423 & 4.2423 & 1.4871 \\ 
    1.6532 & -1.2735 & -1.2735 & 0.0024 \\
    0 & -0.2805 & 0.2805 & -0.8823
    \end{bmatrix};  \quad
    u = \begin{bmatrix} u_c \\ u_{re} \\ u_{le} \\ u_r \end{bmatrix}.
\end{split}
\end{align}
The control input $u$ contains the deflection angle in radians of $4$ actuators, where $u_c$ is the canard wing deflection, $u_{re}$ and $u_{le}$ are the right and left elevon deflections respectively, and $u_r$ is the rudder deflection. We consider the case with no input constraints here, but note that such constraints may be handled by imposing them at each prediction step. We assume that the measurements consist of the entire state $x_k$, and the objective is to stabilize these measurements so that each component of the state converges to zero. The linear subsystem \eqref{eq:ADMIRE} is augmented with nonlinear, sinusoidal perturbations of the form $\beta \sin(\alpha x_k)$ added to the dynamics of each component of the state $x_k$. These perturbations are clearly bounded by some $\Bar{f}$ that depends on the different values of $\beta$ in each component.

Fig.~\ref{figure:ADMIRE_reg} shows the performance of the $1$-step UTC in stabilizing the state to the origin. We choose a particular instance of randomly chosen initial conditions and sinusoidal perturbations. We notice that the sinusoidal perturbations introduce large transient errors which are eventually stabilized by the UTC. In particular, each component of the state converges to zero as desired. The performance in Fig.~\ref{figure:ADMIRE_reg} improves upon the quantitative bound shown in Section~\ref{sec:Stability}, where we only guarantee that the vector of states and inputs converges to a bounded ball around the origin. This behavior is because the analysis in Section~\ref{sec:Stability} is conservative, leading to potentially improved performance in practice.


\subsection{Example 2: Quadcopter} \label{sec:Quad}
We now illustrate the capabilities of the UTC on the dynamical model of a quadcopter \citep{Quadcopter}. 
Similar to the ADMIRE model, we focus on the attitude control problem. With the quadcopter model, we consider two separate frames of reference: the Earth's inertial frame containing Euler angles $z = \begin{bmatrix}\phi \quad \theta \quad \psi \end{bmatrix}^T$ corresponding to roll, pitch and yaw, and the quadcopter's body frame containing the body angular velocities $v = \begin{bmatrix} p \quad  q \quad r\end{bmatrix}^T$. We then have the dynamics
\begin{equation} \label{eq:z}
\dot{z} = \mathbf{R}(\phi, \theta, \psi)v,
\end{equation}
where
$$
\mathbf{R}(\phi, \theta, \psi) = \begin{bmatrix}
    1 & \sin{\phi}\tan{\theta} & \cos{\phi}\tan{\theta} \\
    0 & \cos{\phi} & -\sin{\phi} \\
    0 & \frac{\sin{\phi}}{\cos{\theta}} & \frac{\cos{\phi}}{\cos{\theta}}
\end{bmatrix}
$$
represents the rotation matrix from the body frame for $v$ to the inertial frame for $z$. Further, the dynamics in the body frame are 
\begin{equation} \label{eq:quad}
   \begin{bmatrix}\dot p \\ \dot q \\ \dot r\end{bmatrix} =  
   \begin{bmatrix} \frac{(I_{yy} - I_{zz})qr}{I_{xx}} \\ \frac{(I_{zz} - I_{xx})pr}{I_{yy}} \\ \frac{(I_{xx} - I_{yy})pq}{I_{zz}}\end{bmatrix} - J_r \begin{bmatrix}\frac{q}{I_{xx}} \\ \frac{-p}{I_{yy}} \\ 0\end{bmatrix}\omega_r + \begin{bmatrix}\frac{\tau_\phi}{I_{xx}} \\ \frac{\tau_\theta}{I_{yy}} \\  \frac{\tau_\psi}{I_{zz}} \end{bmatrix}
\end{equation}
where $\omega_r = -\omega_1 + \omega_2 - \omega_3+ \omega_4$. We note that the bilinear terms $qr$, $pr$ and $pq$ in \eqref{eq:quad} contribute to the nonlinear behavior of the system. The body torques $\tau_B$ are derived from the control input $u$ as follows
\begin{equation}
 \tau_B = \begin{bmatrix}\tau_\phi \\ \tau_\theta \\  \tau_\psi \end{bmatrix} = 
 \begin{bmatrix}Lk_t(-\omega_2^2 + \omega^2_4) \\ Lk_t(-\omega_1^2 + \omega^2_3) \\ k_b(-\omega_1^2 + \omega_2^2 - \omega_3^2 + \omega_4^2) \end{bmatrix}; ~~ u = \begin{bmatrix}\omega_1 \\ \omega_2 \\ \omega_3 \\ \omega_4 \end{bmatrix}.
\end{equation}

\begin{figure}[!t]
\centering
\includegraphics[width = 0.45\textwidth]{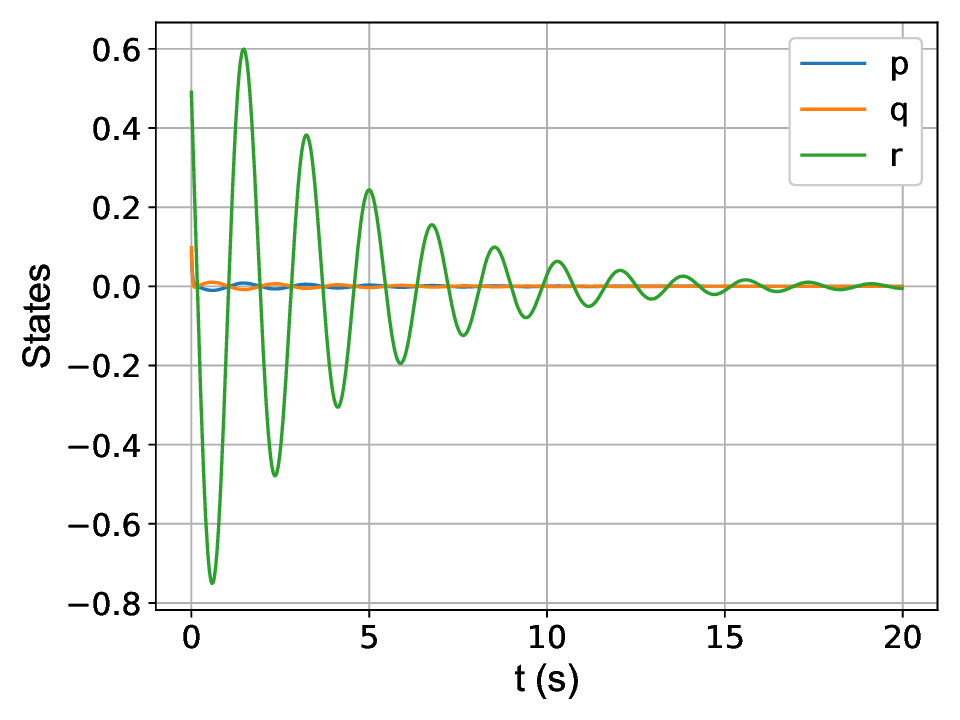}
\caption{Output regulation in a quadcopter model using a $3$-step UTC.}
\label{figure:quad_reg_3}
\end{figure}

\begin{figure}[!t]
\centering
\includegraphics[width = 0.45\textwidth]{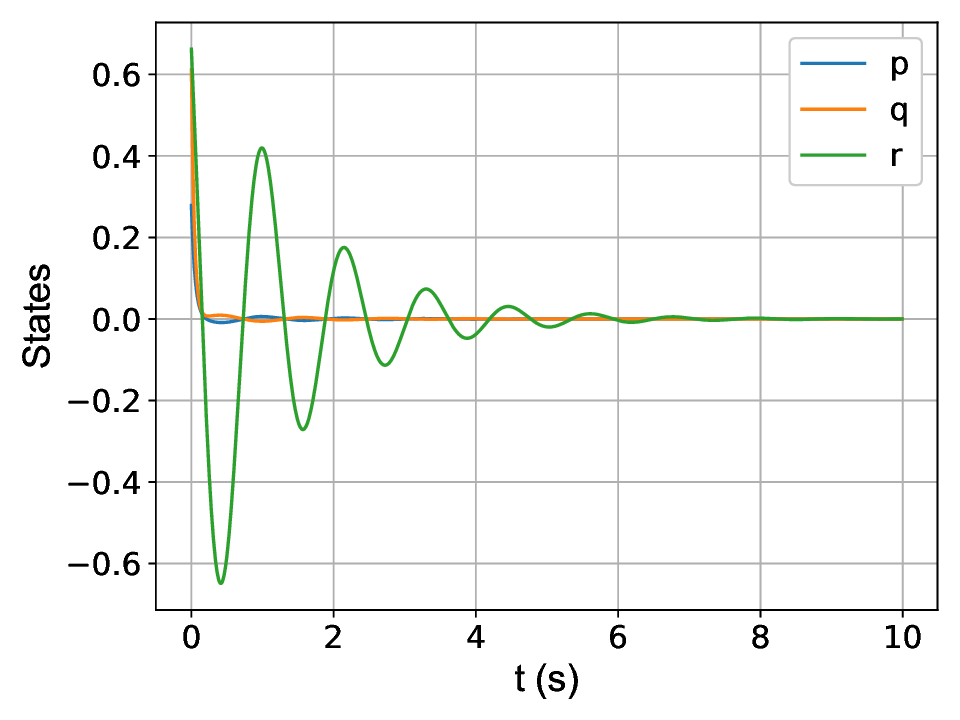}
\caption{Output regulation in a quadcopter model using a $5$-step UTC.}
\label{figure:quad_reg_5}
\end{figure}

\begin{figure}[!t]
\centering
\includegraphics[width = 0.45\textwidth]{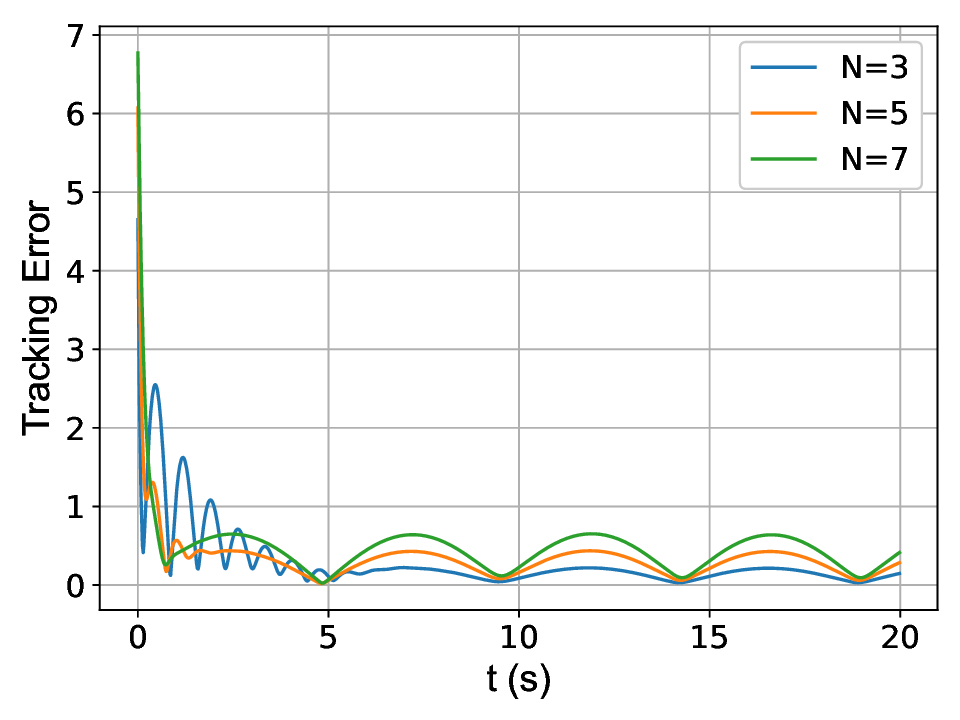}
\caption{Tracking performance of the UTC applied to a nonlinear quadcopter model, in terms of the norm of the tracking error over time.}
\label{figure:err_norm}
\end{figure}

The control input $u$ represents the rotor speed of each rotor of the quadcopter. The state vector consists of all Euler angles and body angular velocities, i.e., $x = [z^T ~~v^T]^T$ and the outputs correspond to the angular velocities, $y = v$. As before, we do not assume input constraints in this example. The constants $I_{xx}$, $I_{yy}$, and $I_{zz}$ represent the moment of inertia of the structure along the labeled axes, while $J_r$, $k_t$, $k_b$, and $L$ are constants related to the dimensions and thrust coefficients of the dynamic model in \cite{Quadcopter}. We note that while there are linear components to these dynamics arising from \eqref{eq:z}, the system is nonlinear in nature and is not strictly based on dynamics of the form \eqref{eq:System}. However, this example serves to illustrate the capabilities of the UTC for reference tracking even for general nonlinear dynamics which may not satisfy the condition of bounded nonlinearities.


In Figs.~\ref{figure:quad_reg_3} and \ref{figure:quad_reg_5}, we illustrate the performance of the UTC in stabilizing the angular velocities $v$ from randomly chosen initial conditions, using a $3$-step and $5$-step UTC algorithm. We note that the yaw rate $r$ displays overshoot due to the weak authority of the rotors over the body frame angular velocity. An important feature revealed in these figures is the benefit of the $N$-step prediction, which mitigates overshoot and shortens settling time. In particular, the $5$-step prediction reduces transients and leads to faster stabilization of the outputs. When sigma points are propagated through dynamics for a smaller number of steps, the resulting outputs have lower variance. This similarity between sigma points results in a large gain $\mathbf{K}$ as the output covariance matrix $\mathbf{P}_y$ is inverted in \eqref{eq:K_update}. Such a large gain can result in the worse transient behavior and slower settling time shown in Fig.~\ref{figure:quad_reg_3} for smaller number of prediction steps. On the other hand, a larger number of prediction steps results in higher computational costs in generating and propagating these sigma points.

In Fig.~\ref{figure:err_norm}, we compare the performance of different prediction sizes in the UTC for tracking sinusoidal references. Due to the highly nonlinear nature of the dynamics, exact tracking is more difficult to achieve; however, the errors converge to a ball around the origin. Interestingly, we note that the transients are attenuated quicker when the number of prediction steps is larger. As the number of prediction steps increases, the controller is able to respond aggressively to immediate deviations, effectively damping the initial overshoot and subsequent oscillations in short order. However, once the transient response subsides, the steady‐state error tends to be more pronounced for a larger number of prediction steps, owing to the nonlinear nature of the plant and the relatively high frequency of the sinusoidal references we choose. Increasing the step size cannot compensate for higher‐order or more rapid nonlinearities over the chosen prediction horizon, leading to the observed residual error. This effect is also evident with higher frequency reference trajectories, where the large prediction horizon conceals high frequency behavior leading to larger tracking errors. There is thus a tradeoff between transient performance and tracking accuracy as the number of prediction steps increases. 

\section{Conclusions} \label{sec:Conclusions}
In this paper, we have presented an analysis of the recently introduced Unscented Transform Controller (UTC) for systems with bounded nonlinearities. The UTC behaves as a dual to the Unscented Kalman Filter (UKF) for nonlinear filtering, and is motivated by the known duality between optimal estimation and control. We reviewed the UTC algorithm for such systems, providing versions with $1$ and $N$ prediction steps. Our analysis of stability used Lyapunov theory, proving that the states and inputs converge to a bounded ball whose radius is dependent on the norm bound on the nonlinearity. We presented two examples, on a fighter jet model and a model of a quadcopter, to demonstrate that the UTC achieves satisfactory regulation and tracking performance. In particular, we demonstrate a trade-off between transient performance and accuracy of tracking as the number of prediction steps increases.

Avenues for future work involve analysis of general nonlinear systems and continuous-time systems, possibly based on tools developed for the UKF by \cite{Sarkka07}. In this context, improving the conservatism of the analysis in Section \ref{sec:Stability} is an important task. Further examining the tradeoff between transient performance and tracking accuracy as discussed in Section \ref{sec:Quad} is another useful idea to explore. At a higher level, the design of the UTC and the motivation behind it spawn questions about the possibility of designing dual controllers for a variety of other estimation algorithms, such as the Extended Kalman Filter and the particle filter.


\balance

\bibliography{references}

\end{document}